\documentclass[single,letter]{sig-alternate-per}

\usepackage{graphicx}

\usepackage{graphicx}

\makeatletter
\numberwithin{equation}{section}
\numberwithin{figure}{section}
\newtheorem{thm}{Theorem}
  \newtheorem{lem}[thm]{Lemma}
\newtheorem{proposition}[thm]{Proposition}
 
\makeatother

\begin{document}

\title{Pricing Agreement between Service and Content Providers: A Net Neutrality
Issue}

\numberofauthors{3}

\author{
\alignauthor
Alexandre  REIFFERS-MASSON\\
     \affaddr{INRIA / LIA,}\\
      \affaddr{2004 Route des Lucioles, 06902 Sophia-Antipolis, France}\\
     \email{alexandre.reiffers@inria.fr}
\alignauthor Yezekael HAYEL\\
 	\affaddr{LIA, University of Avignon, }\\
 	\affaddr{339, chemin des Meinajaries, 84911 Avignon, France}\\
 	\email{yezekael.hayel@univ-avignon.fr}\\
\alignauthor Eitan ALTMAN\\
\affaddr{INRIA}\\
      \affaddr{2004 Route des Lucioles, 06902 Sophia-Antipolis, France}\\
 	\email{eitan.altman@inria.fr}
}

\maketitle

\begin{abstract}
The Net Neutrality issue has been at the center of debate worldwide lately. Some countries have established laws so that principles of Net Neutrality are respected, the Netherlands being the latest country in Europe. Among the questions that have been discussed in these debates are whether to allow agreements between service and content providers, i.e. to allow some preferential treatment by an operator to traffic from some subset of providers. Our goal in this paper is to analyze the impact of non neutral pricing and agreements on the Internet users and on the content providers. Each one of several Internet users have to decide in which way to split their demand among several content providers. The cost for an Internet user depends on whether the content providers have an agreement with the Internet Service Provider in which the Internet user is connected to. In addition, the requests coming from users depend on the preference of the consumer in the different CP. We model the choice of how to split the demands and the pricing aspects faced by the content providers as a hierarchical game model composed of a congestion game at the lower level and a noncooperative pricing game at the upper level. We show that agreement between providers have a positive impact on the equilibrium performance of the Internet users. We further show that at equilibrium, the content provider on the contrary, does not benefit from the agreement.
\end{abstract}

\section{Introduction}

Network Neutrality is an approach to provide network access without unfair discrimination between applications, content or specific source of the traffic.
If there are two applications or services or providers that require the same network resources and one is offered better quality of service (delays, speed, etc.) or is cheaper to access, then there is a discrimination. Although Net neutrality concerns many aspects related to discrimination, there is one particular economic issue that is at the heart of the conflict over network neutrality: the relation between service and content providers. Hahn and Wallsten \cite{key-5} write that net neutrality "usually means that broadband service providers charge consumers only once for Internet access, do not favor one content provider over another, and do not charge content providers for sending information over broadband lines to end users.'' This motivated the study in \cite{key-6}. In this paper we study the impact of other aspects of non-neutrality that arises in the relations between service and content providers, i.e. the possibility that an ISP gives preference or pricing agreement with a content provider. In some industries, laws against vertical monopolies are enforced which in some cases obliged companies to split their activity into separate specialized companies; this was the case of railways companies in Europe which were obliged to separate their rail infrastructure and the service part of the activity which concerns public transportation by trains. In contrast, in the telecom market, the same company may propose both the networking services and content or an ISP and a CP can have a "pricing agreement". The aim of this paper is to study the implication of such economic relationships between providers on the Internet users. Specifically, we try to find a good answer to the next question: Is a "Pricing agreement" between an ISP and a CP good for subscribers? We are suggesting here a new point of view of a "pricing agreement". Usually in a Net Neutrality issue , the problem of agreement or disagreement between ISP and CP is a vertical foreclosure (Degradation of traffic) \cite{key-11}. This type of problem has been observed in France between Free (a French ISP) and Google \cite{key-12}. In our paper, if a CP and an ISP have a "pricing agreement" then a subsciber of the ISP mentionned above doesn't have to pay for the access to this CP's content as illustrated in \cite{key-10}.

We introduce a two level hierarchical game where we consider two types of competition. The first one comes from the subscribers through routing decisions by determining the sources of the content they download. The interaction occurs through the preference induced by these decisions. The second type of competition is between the content providers (CP) through the price they ask for downloading content from their servers. We first study the general noncooperative game where each decision maker is modeled as a selfish player. We assume a two level game based on the prices proposed by the content providers, the subscribers determine selfishly their demand. In a second approach, we consider that each service provider can make an agreement with a content provider, and then the subscribers of this service provider can have access to contents of this particular content provider free of charge. We study the impact of this agreement on the equilibrium of the market.

We show that agreements between providers have a positive impact on the equilibrium performance (cost perceived by users) of the Internet users. We further show that at equilibrium, on the contrary, the content provider does not benefit from the agreement with the service provider, in terms of their revenue.

Related works: Ozdaglar and Acemoglu \cite{key-6}, study the game between subscribers, but in their case subscribers play a non atomic selfish routing game \cite{key-7}. In our case we consider an atomic selfish routing game where a player is modeled as a source of splittable traffic. After all, in \cite{key-6} Ozdaglar and Acemoglu, CPs control both flows and prices, while in our work CPs control only prices; and each Internet user can determine the source of the traffic that he wishes to download. 

The paper is organized as follows. In section \ref{model} we present the general economic model and we describe the hierarchical game proposed. We study it in section \ref{equi1} for the general case without agreements between the providers. In section \ref{equi2} we investigate the agreement between each service provider and one content provider. We compare the cost for an Internet user and the revenue for a content provider to the general hierarchal game context in section \ref{comp}. Finally, in section \ref{conc}, we summarize the results that we have obtained in this paper and we give some perspectives.

\section{General Economic Model}
\label{model}

In the actual Internet market, we can find examples of pricing agreements between CPs and IPSs. For example, Orange is a French ISP and Deezer is a French music streaming service (a content provider). According to the Financial Times \cite{key-10}, "As part of the deal with Deezer, Orange will make available a special mobile-only tariff for pay-monthly customers, to avoid the £9.99 standalone cost of Deezer’s top package." Therefore, customers with an Internet subscription with Orange will have preferential offer for listening music in the website deezer (see description in figure \ref{fig1}.1). 

\begin{figure}[!ht]
\center
\includegraphics[width=8.5cm]{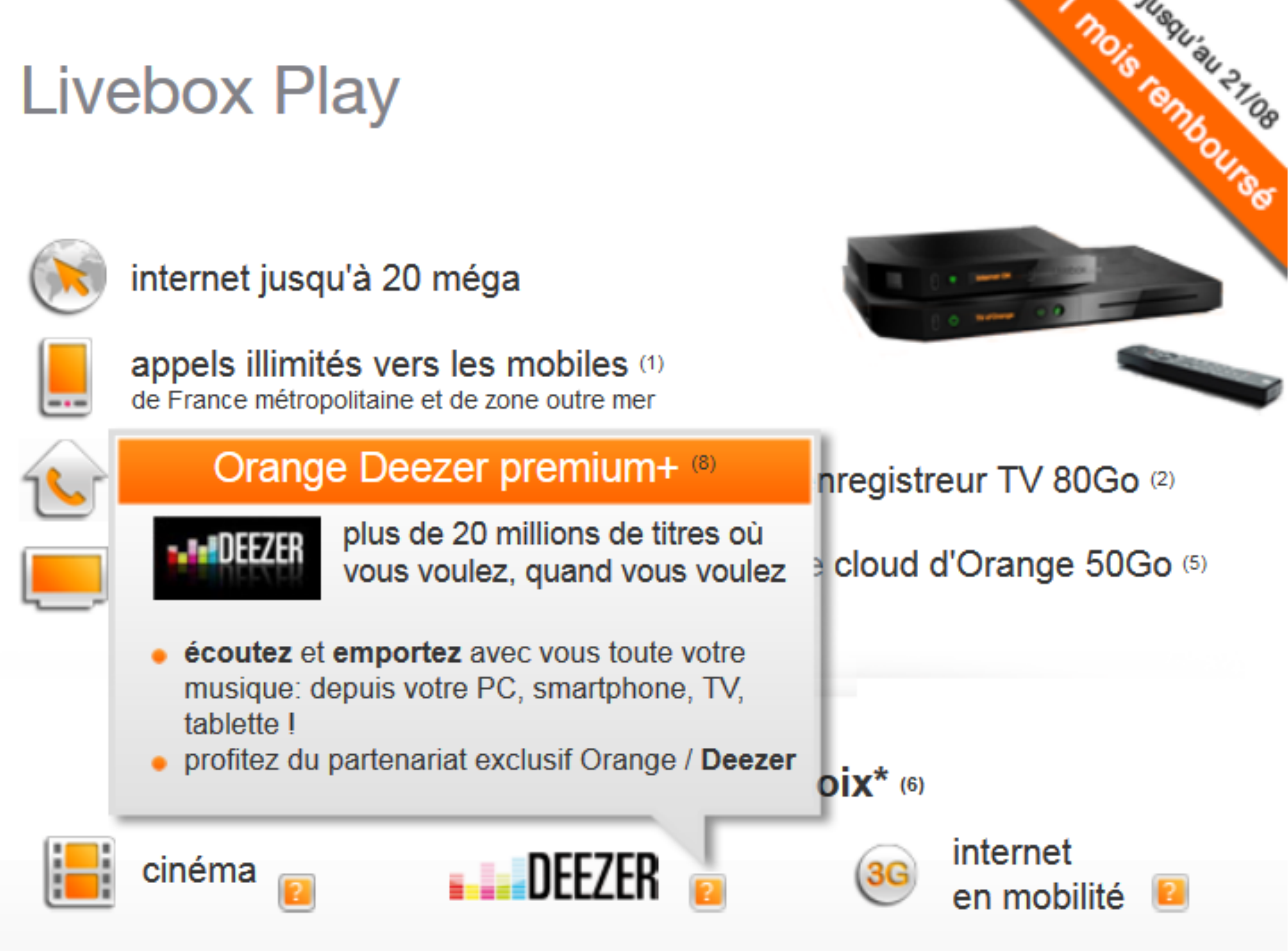}
\label{fig1}
\caption{Example of agreement between an ISP and a CP.}
\end{figure}

We consider a general economic model of content distribution over the Internet. We assume that several content providers, say $M$, are able to broadcast some contents over the Internet. The traffic is carried through the network by high level Internet Service Providers (called ISP) which have direct links to all CP. Finally, those ISP are connected to local ISP, denoted subscribers, which distribute the content to a mass of end users. We consider in our system $M$ CP and $N$ ISP. Each subscriber is connected to an ISP and cannot access directly any CP. We consider in our study the download traffic; contents on the Internet are generated from servers to clients (video-on-demand, movie broadcast, etc). The source of each content flow is a CP and the corresponding destination is a set of subscribers.

\begin{figure}[!ht]
\center
\includegraphics[width=8.5cm]{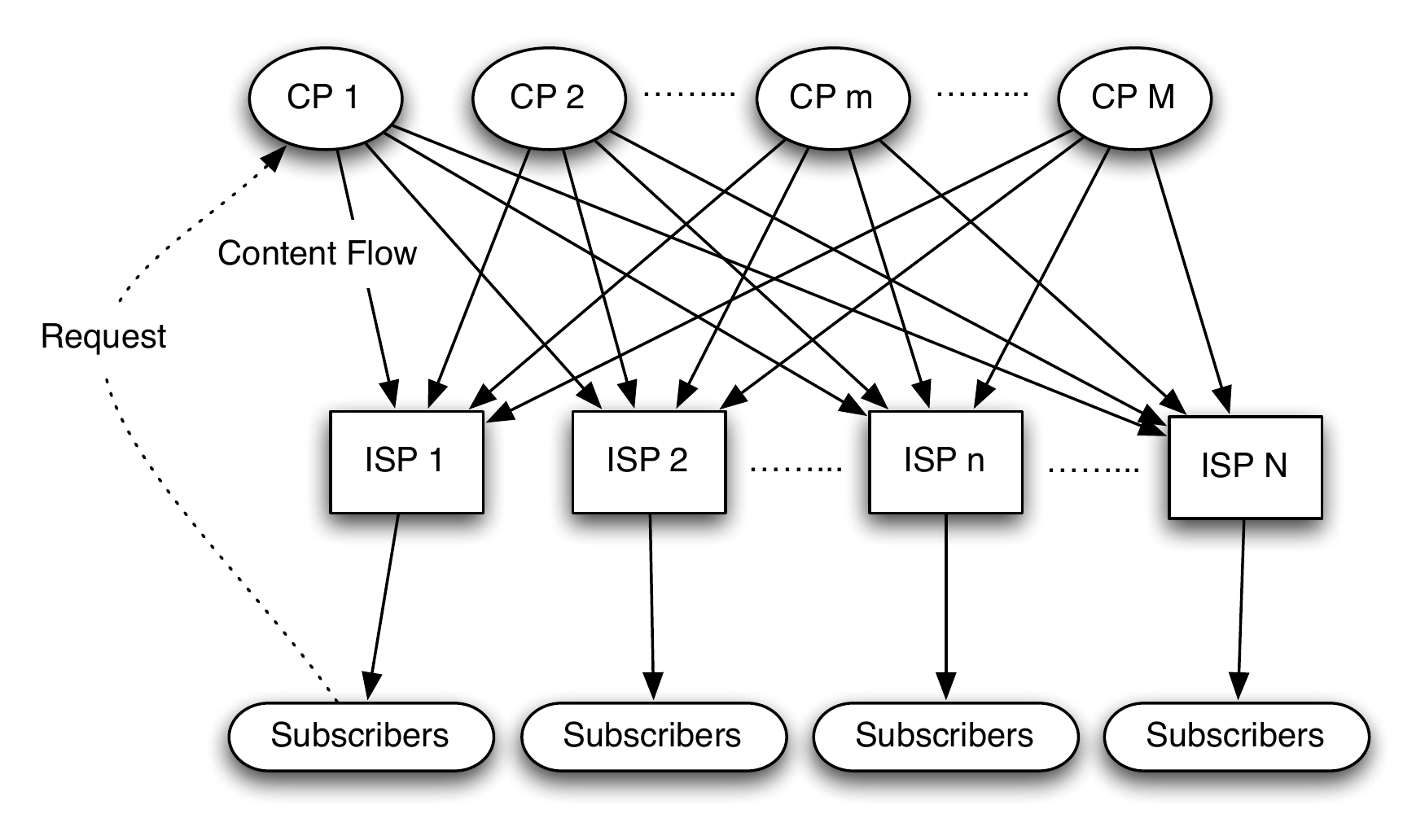}
\label{fig2}
\caption{Economic model with service and content providers.}
\end{figure}

We consider that each ISP $n$ for $n \in \{1,\ldots,N\}$ receives some demands from its subscriber about an access to some contents. Requests are sent from each subscriber $n$ to all CP $m$.  At each CP, any request from a subscriber $n$ (i.e. set of subscribers connected to ISP $n$) induces a traffic rate $x_n^m>0$ from a CP $m$, to the ISP $n$, which is therefore aggregated and sent to the subscriber connected to ISP $n$.  We assume that the total traffic flow from ISP $n$ to its subscriber is $\phi_n$. The network is symmetric and the traffic $\phi_n$ is the same for all subscribers and CP $n$, i.e. $\forall n, \quad \phi_n=\phi$. This demand $\phi$ is an average value of the total amount of requests that all ISPs receive. The economic relationships of our model are described in figure \ref{fig2}.2. \\

\subsection{Congestion game between subscribers}

We consider a noncooperative routing game where the decision of subscriber connected to ISP $n$, is how to split his download traffic $\phi$ from all the CP, i.e. the decision variables for subscriber connected to ISP $n$ is the vector $x_{n}=(x^1_{n},x^2_{n},\ldots,x^M_{n})$. We denote by $p^m$ the charge, per unit of traffic, that  a subscriber has to pay in order to download traffic from CP $m$. Then, for a traffic quantity $x^m_{n}$, the subscriber $n$ has to pay $x^m_{n}p^m$ to the CP $m$. We consider also a preference cost at each CP. Indeed, if we assume that each CP has the same quality of service, a subscriber prefers to download his content from the less crowded CP. This cost depends on the total download traffic generated at each CP $m$, that is $\displaystyle{\underset{n}{\sum}}x^m_{n}$. Let $D^m:\mathbb{R^+}\mapsto\mathbb{R^+}$ be the preference cost function at CP $m$ which we assume to be convex and increasing. The preference cost perceived by a subscriber connected to ISP $n$ and downloads a traffic $x_n^m$ from CP $m$, is equal to $x_{n}^m\cdot D^m(\displaystyle{\underset{n}{\sum}}x^m_{n})$. Then the total cost (content price + congestion cost) for a subscriber connected to ISP $n$ is given by:

\label{cost}
$$
C_{n}(\mathbf{x_{n}},\mathbf{x_{-n}},\mathbf{p})=\displaystyle{\underset{m}{ \sum}}x_{n}^{m}\left[D^m(\displaystyle{\underset{n}{\sum}}x_{n}^m)+p^{m}\right]
$$ 
where $\mathbf{x_{n}}= \{x^1_{n},\ldots,x^M_{n}\}$ is the decision vector for the subscriber connected to ISP $n$, $\mathbf{x_{-n}}$ is the decision vectors of all the other subscribers connected to ISP $n$. $\mathbf{p}=\{ p^1,\ldots,p^M\}$ is the price vector of all the CP. Given this price vector, each subscriber will minimize his cost function under his demand constraint:

$$\forall n, \quad \min\;C_n(\mathbf{x_{n}},\mathbf{x_{-n}},\mathbf{p})$$

$$
\text{s.t.}\quad \underset{m=1}{\overset{N}{\sum}}x_{n}^{m}=\phi. 
$$
\label{optcost}

We then have a first noncooperative congestion game between the subscribers, as they interact through the congestion at each CPs, given the prices determined by the content providers. Therefore we analyse our economic model considering the routing game depicted in figure 2.3 which is equivalent to \ref{fig2}.2 We consider therefore, in a second step, that the CPs choose optimally their tariff $p^m$ in order to optimize their own revenue. This decision will impact the equilibrium between the subscribers and then we are faced with a hierarchical structure between the decision makers, namely the subscribers and the CPs. In order to solve such hierarchical game, we first study the equilibrium of the subscribers (the followers) for a given decision vector of the CPs (the leaders). Secondly, we consider this underlying equilibrium in order to determine the equilibrium between the CPs. 

\begin{figure}[!ht]
\center
\includegraphics[width=8.5cm]{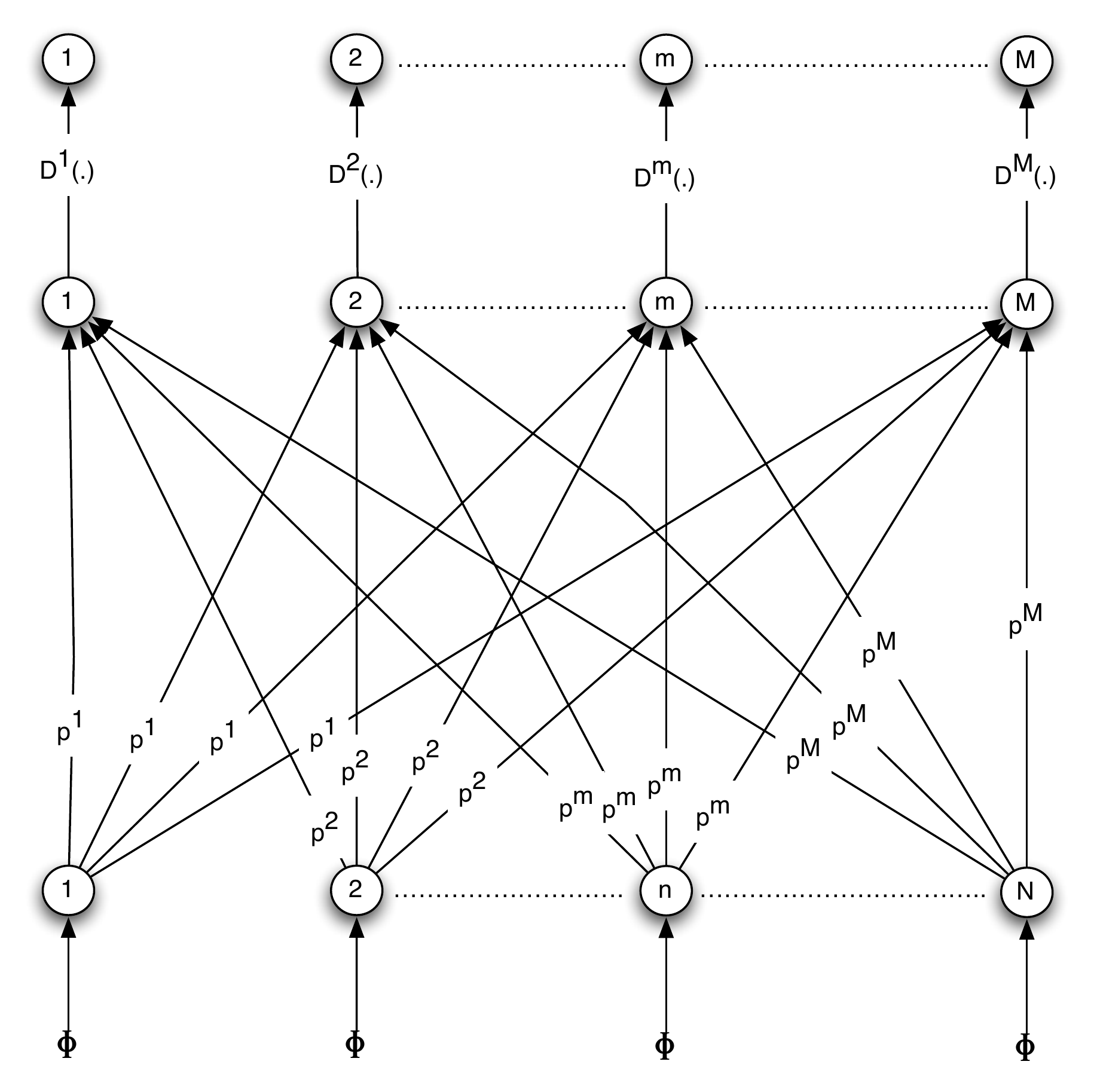}
\label{fig3}
\caption{Equivalent routing game in the case of non agreement}
\end{figure}

\subsection{Content provider game}
The revenue of CP $m$ is defined by:
$$
\Pi(\mathbf{\underline{x}},p^m,p^{-m})=p^{m}\underset{n}{\sum}\underline{x}_{n}^{m}(\textbf{p}),
$$
where $\underline{x}_{n}^{m}(\textbf{p})$ is the equilibrium traffic flow from CP $m$ to ISP $n$, $p^m$ is the price charged by CP $m$ and $p^{-m}$ is the price vector of all the other CP. We assume that the decision variable of each CP $m$ is in an interval, i.e.  $p^m \in [0,p_{\text{max}}]$. The system is totally symmetric, in the sense that the quantity of traffic $\phi$ is the same for all subscribers $n$ and preference cost functions do not depend on $m$, i.e. $\forall m, \quad D^m(.)=D()$. Based on this symmetry property of the game, we can use results of \cite{key-3} and assume the existence of a symmetric equilibrium for our hierarchical game:
$$
\forall m, \quad \max_{p^m}\;p^{m}\underset{n}{\sum}\underline{x}_{n}^{m}(\textbf{p})
$$
such that
$
\forall n$, $\underline{x}_{n}^{m}(\textbf{p})$  is an equilibrium for all the subscribers. First, this symmetric assumption can be justified by the fact that in a large network, we can approximate the behavior of many end users with only one user which has the average characteristics of all the end users. Secondly, this assumption allows us to obtain explicit formulations, described in the next section, of the equilibrium of our complex hierarchical game.

\section{Equilibrium analysis}
\label{equi1}
We are faced with a non-cooperative pricing game considering an underlying congestion game. In order to determine the equilibrium of this hierarchical game, we first solve the congestion game between the subscribers (for a fixed price vector of the CPs). Secondly, we solve the non-cooperative pricing game between the content providers taking into account the underlying equilibrium between the subscribers. We consider linear cost function $D(x)=ax$ as in \cite{key-6}.\\

\begin{proposition}
In this game with independent service providers and content providers, a unique symmetric equilibrium $(x_n^{m},p^m)= (\underline{x},\underline{p})$ exists for all $(n,m)\in\{1,\ldots,N \}\times \{1,\ldots,M\} $, given by:
$$
\underline{x}^m_n=\frac{\phi}{M}\quad \mbox{and}\quad \underline{p} =(N-1)\phi a.
$$\\
\end{proposition}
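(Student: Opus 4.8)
The strategy is the standard backward-induction approach for a two-level Stackelberg/hierarchical game: first solve the lower-level congestion game among the subscribers for an arbitrary fixed price vector $\mathbf{p}$, then substitute the resulting equilibrium flows into the content providers' revenue functions and solve the upper-level pricing game.

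\emph{Step 1: the congestion equilibrium for fixed prices.} Each subscriber $n$ minimizes the convex objective $C_n(\mathbf{x_n},\mathbf{x_{-n}},\mathbf{p})=\sum_m x_n^m\bigl[a(\sum_k x_k^m)+p^m\bigr]$ subject to the linear constraint $\sum_m x_n^m=\phi$. Since $D(x)=ax$ is increasing and convex, $C_n$ is strictly convex in $\mathbf{x_n}$ over the simplex, so the Karush--Kuhn--Tucker conditions are necessary and sufficient. I would write the Lagrangian with multiplier $\lambda_n$ for the demand constraint and, assuming an interior solution, derive the first-order conditions $a\sum_k x_k^m + a x_n^m + p^m = \lambda_n$ for every used CP $m$. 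Exploiting the symmetry postulated in the model (all subscribers identical, $\phi_n\equiv\phi$, $D^m\equiv D$), I look for a symmetric profile $x_n^m = y^m$ for all $n$; the FOCs then become $aNy^m + a y^m + p^m = \lambda$, i.e. $y^m = (\lambda - p^m)/(a(N+1))$, and summing over $m$ with $\sum_m y^m=\phi$ pins down $\lambda$ and hence the unique equilibrium flows $\underline{x}_n^m(\mathbf p)$ as explicit affine functions of $\mathbf p$. Uniqueness of the lower-level equilibrium follows from strict convexity together with the diagonal-strict-convexity (Rosen) argument for this type of routing game; I would also check that the chosen $p_{\max}$ keeps the solution interior (all $y^m>0$), or otherwise note that in the relevant symmetric region the interior formula applies.

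\emph{Step 2: the pricing game between CPs.} Substituting $\underline{x}_n^m(\mathbf p)$ from Step 1, CP $m$'s revenue becomes $\Pi^m(p^m,p^{-m}) = p^m \sum_n \underline{x}_n^m(\mathbf p) = N p^m\,y^m(\mathbf p)$, which after the substitution is a concave quadratic in $p^m$ (the coefficient of $(p^m)^2$ is negative because $y^m$ decreases linearly in $p^m$). Taking $\partial \Pi^m/\partial p^m = 0$ gives CP $m$'s best response as an affine function of the other prices; imposing the symmetric ansatz $p^m = \underline p$ for all $m$ collapses this to a single scalar equation, whose solution I expect to simplify to $\underline p = (N-1)\phi a$. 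Plugging $\underline p$ back into the Step-1 formula and using $p^1=\cdots=p^M$ forces $\lambda - p^m$ equal across $m$, hence $y^m = \phi/M$, giving $\underline{x}_n^m = \phi/M$. Existence of a symmetric equilibrium is guaranteed by the cited result of \cite{key-3} together with the symmetry of the game, and uniqueness follows because the symmetric scalar fixed-point equation for $\underline p$ is linear (hence has a unique root), while at any symmetric price the lower-level equilibrium is already unique by Step 1; I should also verify the second-order condition and that $\underline p \le p_{\max}$ (i.e. assume $p_{\max}$ is large enough), so the unconstrained optimum is the constrained one.

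\emph{Main obstacle.} The routine algebra in Steps 1--2 is straightforward; the delicate points are (i) justifying that the equilibrium is \emph{interior} so the KKT conditions reduce to the clean linear system — one must argue that no CP is left unused at equilibrium, which here follows from symmetry of $D$ and of $\phi$ but should be stated carefully — and (ii) establishing \emph{uniqueness} of the full hierarchical equilibrium rather than just existence, since \cite{key-3} only supplies existence of a symmetric equilibrium. For (ii) the argument is that, restricted to symmetric profiles, the CP stage game reduces to a single concave scalar optimization with a unique maximizer, and the follower equilibrium is unique by strict convexity, so the composed symmetric equilibrium is unique; I would make sure this reduction is airtight, as it is the only non-mechanical part of the proof.
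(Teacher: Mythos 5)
Your plan is methodologically the same as the paper's proof: backward induction, a Lagrangian/KKT reduction of the symmetric follower (congestion) game to a small linear system, then the first-order condition of one CP's revenue against symmetric rivals and a symmetric fixed point in prices. The only presentational difference is that you solve the follower stage for a general price vector before specializing, while the paper sets the $M-1$ rivals to a common price $q$ from the outset; your added attention to interiority and to uniqueness of the composed equilibrium is a genuine improvement, since the paper asserts both without argument.

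However, the one step you defer --- ``whose solution I expect to simplify to $\underline p=(N-1)\phi a$'' --- is precisely the step that does not go through, and it is the entire content of the upper level. From your own Step 1, $y^m=\frac{\phi}{M}+\frac{\sum_k p^k-Mp^m}{a(N+1)M}$, so $\frac{\partial y^m}{\partial p^m}=-\frac{M-1}{a(N+1)M}$, and the symmetric stationarity condition $y^m+p\,\frac{\partial y^m}{\partial p^m}=0$ with $y^m=\phi/M$ gives $\underline p=\frac{a(N+1)\phi}{M-1}$, which equals $(N-1)a\phi$ only when $(N-1)(M-1)=N+1$ (e.g.\ $N=M=3$). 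A concrete check with $N=M=2$, $a=\phi=1$: against a rival at $q=1$ the deviator's flow is $y=(4-p)/6$ and its revenue $p(4-p)/3$ is maximized at $p=2$, so $p=1=(N-1)a\phi$ is not a best reply, whereas $p=3=a(N+1)\phi/(M-1)$ is the symmetric fixed point. So either the follower first-order condition is meant to be something other than $a(N+1)x_n^m+p^m=\lambda_n$ (which is what both you and the paper write down), or the stated value of $\underline p$ requires correction. In any case you cannot leave this as ``expected to simplify'': carrying out the algebra of your own plan contradicts the claimed formula, and the flow value $\underline x_n^m=\phi/M$ (which does follow at any symmetric price) survives while the price does not.
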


\begin{proof}

 First let $$L_{n}(\mathbf{x_{n}},\mathbf{x_{-n}},\mathbf{p},\lambda_{n})=$$
 $$\displaystyle{\underset{m}{ \sum}}x_{n}^{m}\left[D^m(\displaystyle{\underset{n}{\sum}}x_{n}^m)+p^{m}\right]-\lambda_{n}(\displaystyle{\underset{m}{\sum}}x_{n}^m-\phi) $$ the Lagrangian function associated to the cost function $C_{n}(\cdot)$. We look for a symmetric equilibrium between the CPs, i.e. for the noncooperative pricing game at the upper layer. Then we assume that CPs $m'\in \{1,\ldots,M \}-\{m\}$ play $q$ and one CP, say $m$, plays $p^m$. We want to find some $q$ where the best reply of CP $m$ against $q$ is $q$. First, we have to determine the equilibrium flows between the subscribers, depending on those prices,i.e. $\mathbf{\underline{x}}(p^m,q)$ for all $p^m$ and $q$. We look for $\mathbf{\underline{x}}(p^m,q)$, a solution of the following system: 

 $$ \begin{cases}
{\displaystyle \frac{\partial L_{n}}{\partial x_{n}^{m}}}(\mathbf{x_{n}},\mathbf{x_{,-n}},p^m,q,\lambda_{n})=0\\
\underset{m}{\sum}x_{n}^m=\phi,\;\forall(n,m)
\end{cases}.$$

This game has a strong symmetric property as all subscribers are interchangeable. Then, we can restrict ourselves to two strategies $x$ and $y$ where $x$ is a request for CP $m'$ and $y$ is for CP $m$. This induces a great simplification in the analysis of our complex hierarchical game. Figure \ref{fig4}.1 shows the different variables of our system. Thanks to \cite{key-3}, the previous system is equivalent to the following one:\\

$$\begin{cases}
x\displaystyle{\frac{\partial D}{\partial  x}}(Nx)+D(Nx)+q-\lambda=0\\
\\
y\displaystyle{\frac{\partial D}{\partial y}}(Ny)+D(Ny)+p^m-\lambda=0\\
\\
(M-1)x+y=\phi.
\end{cases}$$

We denote by $\underline{x}$ and $\underline{y}$ a solution of this system. We consider the linear cost function $D(x)=ax$ and thus we have to solve the following linear system:

$$\begin{cases}
ax+a(Nx)+q-\lambda=0\\
\\
ay+a(Ny)+p^m-\lambda=0\\
\\
(M-1)x+y=\phi.
\end{cases}$$  \\

 If $\underline{x} <0$ (or $\underline{y}<0$) then  $\underline{x} =0$ (or $\underline{y}=0$). And if $\underline{x} >\phi$ (or $\underline{y}>\phi$) then  $\underline{x}=\phi$ (or $\underline{y}=\phi$).\\

Let's now consider the CP $m$ and how it's going to optimize its revenue, which is the function $R^m(p^m,q)=p^m \underline{y}N$. Its best reply against all other CPs that play $q$ is given by $p^m$ solution of $\displaystyle{\frac{\partial R^m}{\partial p^m}}(p^m,q)=0$. We have to find a certain $p$ which is a solution of $\displaystyle{\frac{\partial R^m}{\partial p^m}}(p,p)=0$. We denote this equilibrium by $\underline{p}$. Considering the linear cost function, we obtain:

$$\underline{p} =(N-1)\phi a .$$

 If $\underline{p}>p_{max}$ then $\underline{p}=p_{max}$. We have a particular interest in the case where $p_{max}>(N-1)\phi a$. Then the equilibrium flow from CP $n$ to ISP $m$ is \small $\underline{x}^m_n=\displaystyle{\frac{\phi}{M}}$.\\

\begin{figure}[!ht]
\center
\includegraphics[width=5cm]{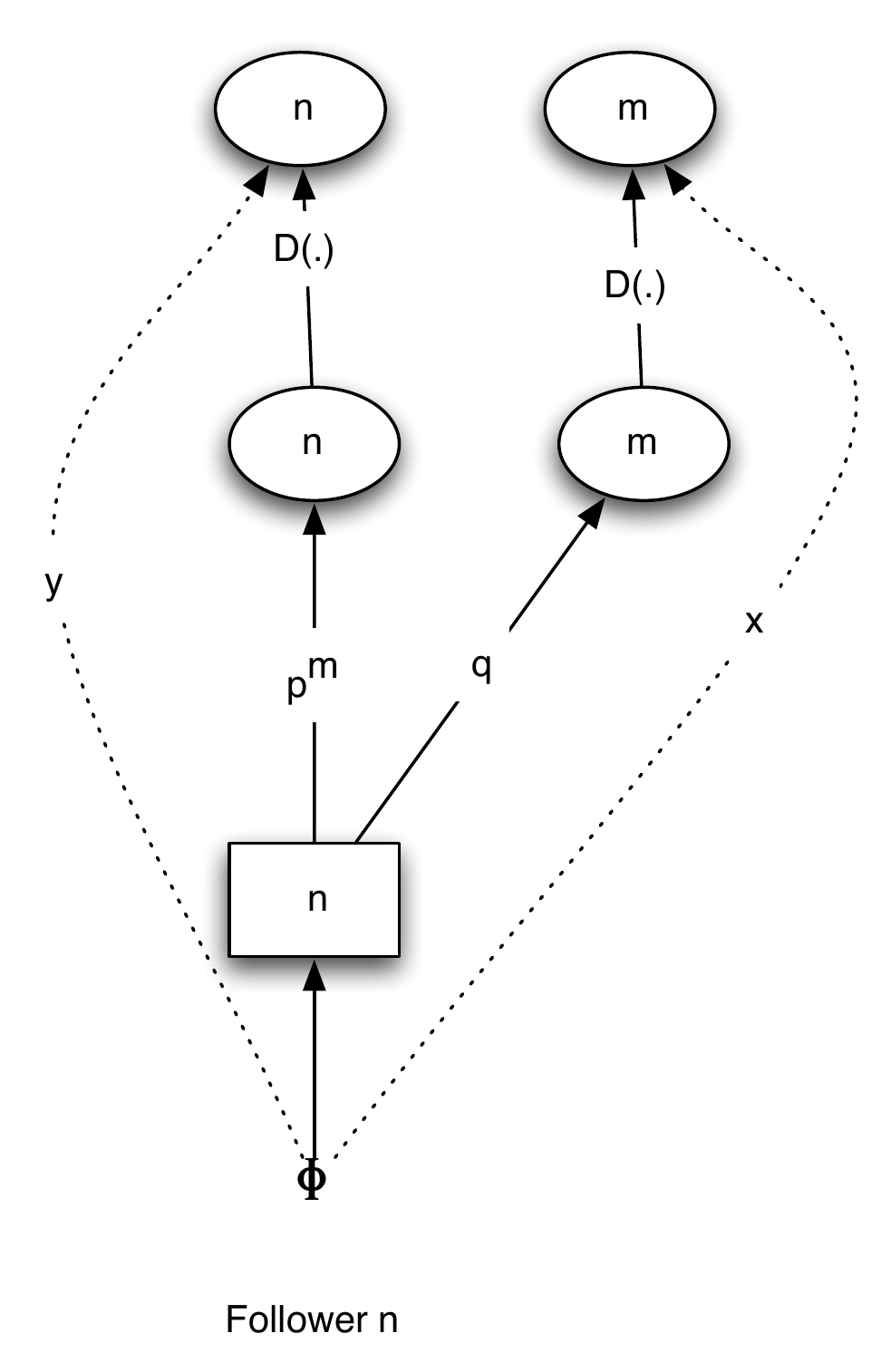}
\label{fig4}
\caption{Followers symmetric strategies (help for the proof)}
\end{figure}

%
%
%
%
%
%

\end{proof}

Considering this result, we are able to determine the cost for a subscriber and the revenue of a CP at equilibrium. In fact, this proposition gives the equilibrium prices and the value of the total traffic generated by each CP at equilibrium. Note that, as defined in \cite{key-2}, we also have uniqueness of this total traffic at equilibrium:
$$
\forall m, \quad \sum_{n}^{}\underline{x}^m_n=\frac{N\phi}{M}.
$$
The cost for a subscriber connected to CP $n$ at the equilibrium is given by:
$$
C_{n}(\underline{x},\underline{p})= \phi^2a(N-1+\frac{N}{M}).
$$ 
The revenue for any CP $m$ is:  
$$
\Pi_m(\underline{p})=\frac{(N-1)N}{M}\phi^2a.
$$

\section{Agreements between service and content providers}
\label{equi2}
We consider now that each ISP $n$ makes an agreement with a CP. Then, in order to have a symmetric configuration such as each ISP has an agreement with one CP and vice-versa, we assume that the number of ISP is equal to the number of CP, i.e. $M=N$. In order to simplify, $n$ is the index of the CP which has an agreement with ISP $n$.  These agreements imply that the charge $p^n$ is equal to $0$ for the  traffic generated from the CP $n$ to the ISP $n$ (see the routing game \ref{fig5}.1). Then, the total cost for the subscriber connected to ISP $n$ becomes:
$$
C^v_{n}(\mathbf{x_{n}},\mathbf{x_{-n}},\mathbf{p})=
\underset{m\neq n}{ \sum}x_{n}^{m}\left[D(\underset{n}{\sum}x_{n}^m)+p^{m}\right]
+x^n_{n}D(\underset{n}{\sum}x_{n}^n), 
$$
\label{cost2}
where $\textbf{p}$ is the vector (size $N-1$) of the prices for all CP except $n$. The revenue of the CP $m$ is now:
$$
\Pi^v(p^{m},p^{-m})=p^{m}\displaystyle{\underset{n\neq m}{\sum}}\underline{x}_{n}^{m} (\mathbf{p}).
$$

Let $y^n_n$ be the traffic requested by a subscriber connected to an ISP $n$ from the CP $n$ associated to that ISP. 
Let $y^m_n$, with $n\neq m$, traffic requested by a subscriber connected to an ISP $n$ from CP $m$ not associated to that ISP. \\

\begin{figure}[!ht]
\center
\includegraphics[width=8.5cm]{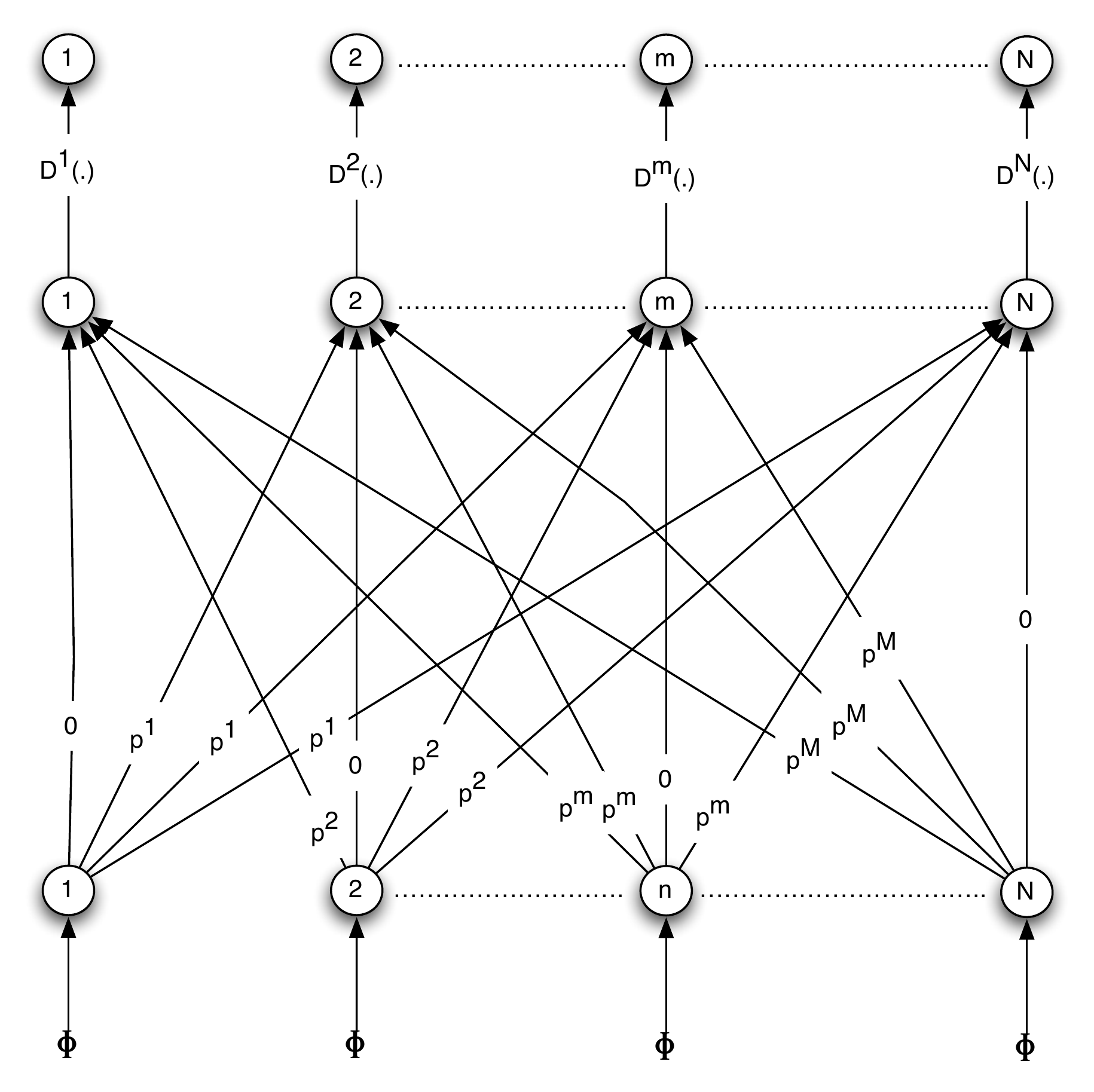}
\label{fig5}
\caption{Equivalent routing game with agreement for M=N}
\end{figure}

\begin{proposition}
In the game with an agreement between each service provider and content provider, it exists for all $(i,n,m)\in \{1,\ldots,I\}\times\{1,\ldots,N\}^2 $ a symmetric equilibrium $(y^{m}_{n},y^{n}_{n},p^m)= (\underline{z},\underline{y},\underline{q})$, which is given by:
 $$
 \underline{q} =a\phi\frac{(N+1)}{3N-1}
$$
and
$$
\underline{z}= \frac{\phi}{N}(\frac{2N-2}{3N-1}), \qquad
\underline{y}= \frac{\phi}{N}(1+\frac{(N-1)(N+1)}{3N-1}).
$$\\
\end{proposition}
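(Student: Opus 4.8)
The plan is to reproduce the two-stage argument of Proposition~1: freeze the prices, solve the lower-level congestion game between the subscribers to get the equilibrium flows as functions of the prices, then feed these flows into the revenue of one content provider, maximise, and finally impose the symmetry fixed-point condition on the prices.

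For the lower level I would write, for each subscriber $n$, the Lagrangian of $C^v_n$ under $\sum_m x_n^m=\phi$, namely
$$L_n=\underset{m\neq n}{\sum}x_n^m\Big[D\big(\underset{k}{\sum}x_k^m\big)+p^m\Big]+x_n^nD\big(\underset{k}{\sum}x_k^n\big)-\lambda_n\Big(\underset{m}{\sum}x_n^m-\phi\Big),$$
and use convexity of $D$ together with \cite{key-3} to reduce the stationarity system to a few symmetric strategies. Singling out the CP whose price is perturbed (call it $m$, with every other CP charging $q$), there are five relevant flows: the traffic $a_1$ that the matched subscriber $m$ sends to its own CP $m$ (at price $0$); the traffic $a_2$ it sends to each other CP (at price $q$); the traffic $b_1$ that any other subscriber $n\neq m$ sends to CP $m$ (at price $p^m$); the traffic $b_2$ it sends to its own CP $n$ (at price $0$); and the traffic $b_3$ it sends to the remaining CPs (at price $q$). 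By the same symmetry/uniqueness argument all subscribers $n\neq m$ behave identically and all CPs $\neq m$ carry the same load. These flows satisfy the two conservation laws $a_1+(N-1)a_2=\phi$ and $b_1+b_2+(N-2)b_3=\phi$; moreover the aggregate load at CP $m$ is $a_1+(N-1)b_1$, while the load at any other CP is $a_2+b_2+(N-2)b_3$. Substituting $D(x)=ax$ turns the KKT conditions into a linear system, which I would solve (eliminating $\lambda_m,\lambda_n$ and then $b_2$ first) to obtain in particular
$$b_1(p^m,q)=\frac{(N+1)\phi+(N-3)\,q/a-2(N-1)\,p^m/a}{N(N+1)},$$
together with the expressions for the remaining flows (clipped to $[0,\phi]$ exactly as in Proposition~1 when they would fall outside).

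For the upper level, note that because of the agreement CP $m$ collects revenue only from the $N-1$ subscribers not matched with it, so $R^m(p^m,q)=p^m(N-1)\,b_1(p^m,q)$, which is a concave quadratic in $p^m$. Setting $\partial R^m/\partial p^m=0$ and then imposing $p^m=q=\underline q$ gives $(N+1)\phi-(3N-1)\,\underline q/a=0$, i.e. $\underline q=a\phi(N+1)/(3N-1)$. Plugging $\underline q$ back into $b_1$ yields $\underline z=(\phi-\underline q/a)/N=\frac{\phi}{N}\cdot\frac{2N-2}{3N-1}$, and the conservation law $\underline y+(N-1)\underline z=\phi$ gives $\underline y=\frac{\phi}{N}\big(1+\frac{(N-1)(N+1)}{3N-1}\big)$, which are the claimed values. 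One finishes, as in Proposition~1, by recording the boundary cases (any of $\underline z,\underline y$ meeting $0$ or $\phi$, and $\underline q$ meeting $p_{\max}$), focusing on the regime $p_{\max}>\underline q$ where the interior solution is the announced equilibrium; uniqueness of the aggregate loads follows as before via \cite{key-2}.

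The main obstacle is the lower-level linear system: compared with Proposition~1 it involves five flow variables and two multipliers instead of two flows and one multiplier, and it is easy to mis-assign which subscriber contributes congestion to CP $m$ versus to the other CPs, or to forget that each subscriber's own CP enters its objective with price $0$. Careful bookkeeping of the loads $a_1+(N-1)b_1$ and $a_2+b_2+(N-2)b_3$ and of the two constraints is what makes the reduction work; once $b_1(p^m,q)$ is in hand, the pricing step and the back-substitution are routine.
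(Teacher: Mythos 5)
Your proposal follows the same route as the paper's proof: the same reduction to five symmetric flow variables (your $a_1,a_2,b_1,b_2,b_3$ are exactly the paper's $y,x,u,v,w$), the same two conservation laws and aggregate loads $a_1+(N-1)b_1$ and $a_2+b_2+(N-2)b_3$, the same revenue $p^m(N-1)b_1$, and the same symmetric fixed-point condition $(N+1)\phi=(3N-1)\underline q/a$. Your explicit intermediate formula for $b_1(p^m,q)$ checks out and all three claimed equilibrium values follow correctly, so the argument is sound and essentially identical to (indeed somewhat more detailed than) the one in the paper.
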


\begin{proof}
In order to compute an equilibrium for the game with agreements, we can use the method described previously. \newline
We consider the following Lagrangian function:
$$L^v_{n}(\mathbf{x_{n}},\mathbf{x_{-n}},\mathbf{p},\lambda_{n})=$$ $$\displaystyle{\underset{m\neq n}{ \sum}}x_{n}^{m}\left[D^m(\displaystyle{\underset{n}{\sum}}x_{n}^m)+p^{m}\right]+x_{n}^n[D(\underset{n}{n})x_{n}^n] -\lambda_{n}(\displaystyle{\underset{m}{\sum}}x_{n}^m-\phi).$$

As previously, we assume that CPs $m'\in \{1,\ldots,N \}-\{m\}$ play $q$ and the CP $m$ plays $p^m$. Now again there are several symmetries: we can see that there are two types of subscribers. Each subscriber of each type are interchangeable. Type 1 is subscribers with an agreement with CP m. Type 2 is subscribers without an agreement with CP m.\\
The variables of Type 1 subscribers are:
\begin{itemize}
\item $x$ is the flow from CP $m'$, 
\item $y$ is the flow from CP $m$.
\end{itemize}
The variables of Type 2 subscribers are:
\begin{itemize}
\item  $u$ is the flow from CP $m$,
\item  $v$ is the flow from the CP with has an agreement,
\item  $w$ is the flow from all the other CP except CP $m$ and CP with the agreement.

\end{itemize}
All those decision variables are depicted in figure \ref{fig6}.2

\begin{figure}[!ht]
\center
\includegraphics[width=8.5cm]{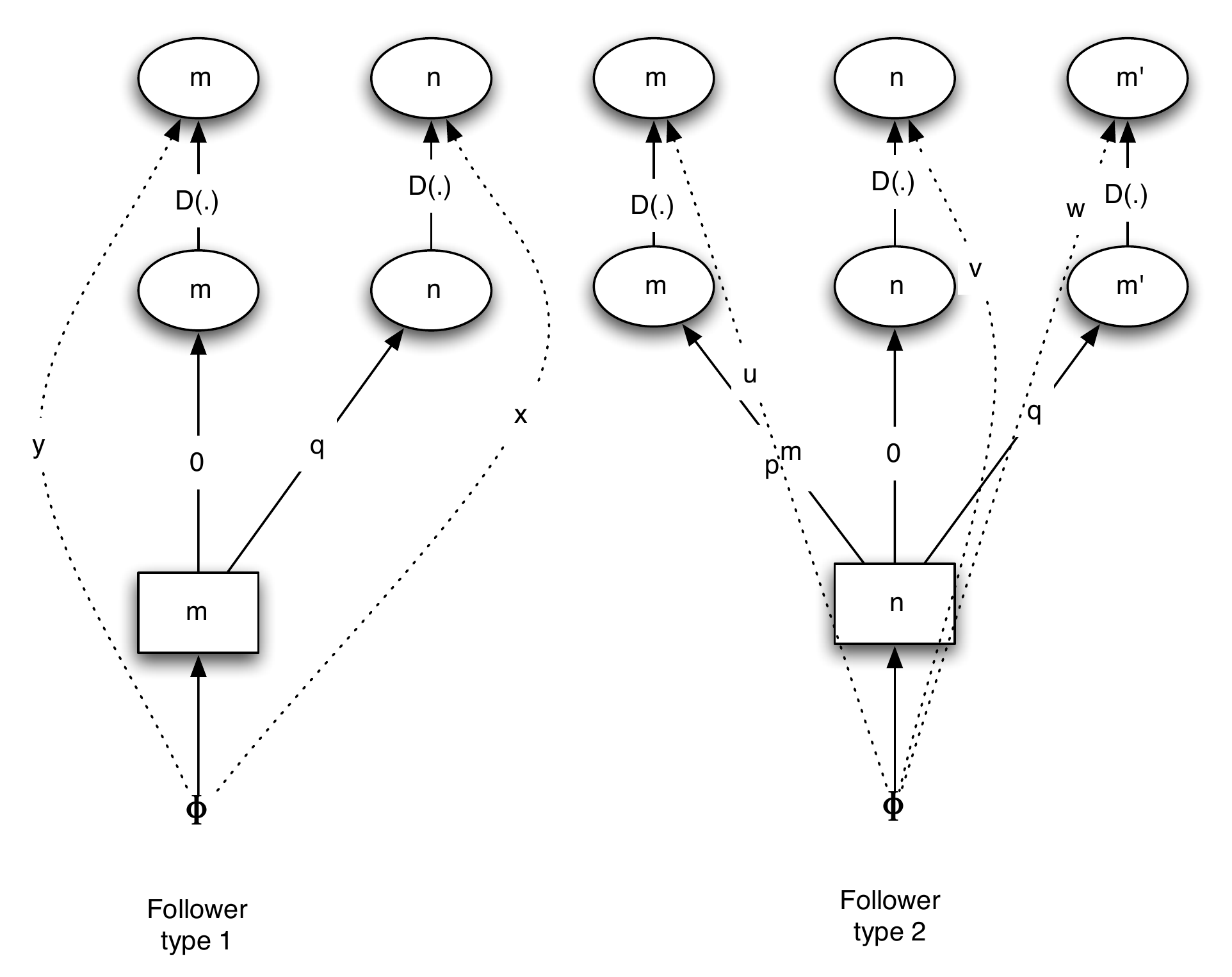}
\label{fig6}
\caption{Followers' strategies in the case of an agreement (help for the proof)}
\end{figure}

Again, thanks to \cite{key-3}, the system is equivalent to the following one with 5 variables $(x,y,u,v,w)$:
\small
$$\begin{cases}
x\displaystyle{\frac{\partial D}{\partial x}}(x+v+(N-2)w)+D(x+v+(N-2)w)+q-\lambda_1=0\\
\\
v\displaystyle{\frac{\partial D}{\partial v}}(x+v+(N-2)w)+D(x+v+(N-2)w)-\lambda_2=0\\
\\
w\displaystyle{\frac{\partial D}{\partial w}}(x+v+(N-2)w)+D(x+v+(N-2)w)+q-\lambda_2=0\\
\\
y\displaystyle{\frac{\partial D}{\partial y}}(y+(N-1)u)+D(y+(N-1)u)-\lambda_1=0\\
\\
u\displaystyle{\frac{\partial D}{\partial u}}(y+(N-1)u)+D(y+(N-1)u)+p^m-\lambda_2=0\\
\\
(M-1)x+y=\phi.\\
\\
(M-2)w+u+v=\phi
\end{cases}$$
\normalsize

 Considering the linear cost function $D(x)=ax$. We have to solve the linear system that are given below:
 
 \small
 $$\begin{cases}
 ax+a(x+v+(N-2)w)+q-\lambda_1=0\\
 \\
 av+a(x+v+(N-2)w)-\lambda_2=0\\
 \\
 aw+a(x+v+(N-2)w)+q-\lambda_2=0\\
 \\
 ay+a(y+(N-1)u)-\lambda_1=0\\
 \\
 au+a(y+(N-1)u)+p^m-\lambda_2=0\\
 \\
 (M-1)x+y=\phi.\\
 \\
 (M-2)w+u+v=\phi
 \end{cases}$$
 \normalsize

We denote $\underline{x}$, $\underline{y}$, $\underline{u}$, $\underline{v}$, $\underline{w}$ the solution of the previous system.  The revenue of CP $m$ is $R^m_v(p^m,q)=p^m\times (N-1) \underline{u}$. To compute the equilibrium price we need to find $p$ which solves $\displaystyle{\frac{\partial R^m_v}{\partial p^m}}(p,p)=0$.
 If $\underline{q}>p_{max}$ then $\underline{q}=p_{max}$. We have a particular interest in the case where $p_{max}>a\phi\frac{(N+1)}{3N-1}$. Equilibrium price is:
$$
 \underline{q} =a\phi\frac{(N+1)}{3N-1}
$$

Then the equilibrium flow from CP $n$ to ISP $m$, $n\neq m$ is \small $\underline{z}= \frac{\phi}{N}(\frac{2N-2}{3N-1})$, and the equilibrium flow from CP $n$ to ISP $n$ is \small\\ $\underline{y}= \frac{\phi}{N}(1+\frac{(N-1)(N+1)}{3N-1}).$\\
\end{proof}

The cost for the subscriber connected to ISP $n$, at the equilibrium, is given by:
$$
C^v_{n}(\underline{y},\underline{z},\underline{q})= \phi^2 a +2a\phi^{2}(\frac{N-1}{3N-1})^2 (\frac{N+1}{N}).
$$
The reward for CP $m$ at the equilibrium is:   
$$
\Pi_m^v(\underline{q},\underline{y},\underline{z})=2a\phi^{2}(\frac{N-1}{3N-1})^2 (\frac{N+1}{N}). 
$$

One important remark is that the download traffic from the privileged CP, $\underline{y}$, has a bounded limit of $\frac{\phi}{3}$ when the number of provider $N$ tends to infinite. In the context without agreements, all the download rates converge to 0. Thus, it means that by making agreement, each CP has a minimum quantity guarantee of traffic to send. It is  an important result for dimensioning CP network infrastructure. In the next section, we compare the cost for any subscriber and the content provider's revenue depending on the economic context without or with agreements between providers.

\section{Comparisons}
\label{comp}
This paper aimed to study the impact of the agreements between service and content providers on the cost of the subscribers and also on the content provider's revenue. In the last two sections we have been able to determine explicitly the equilibrium solutions of the hierarchical game in the two scenarios. Consequently we have obtained the cost of the users and the reward of the CP at this equilibrium. Then, we can compare them to study the impact of those agreements. \\
 
\begin{lem} At equilibrium, the agreement between service and content providers is good for the subscribers, i.e. for all subscribers connected to service provider $n$ we have:
$$
C_n^v(\underline{y},\underline{z},\underline{q})<C_n(\underline{x},\underline{p}).
$$\\
\end{lem}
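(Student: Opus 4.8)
The plan is purely computational: both equilibrium costs are given in closed form by Propositions~1 and~2, so I would substitute them and reduce the claimed inequality to an elementary polynomial inequality in $N$.

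First I would note that the comparison is made in the symmetric regime $M=N$ assumed throughout Section~\ref{equi2}, so the no-agreement cost must be specialized to $M=N$, giving
$$C_n(\underline{x},\underline{p})=\phi^2 a\left(N-1+\frac{N}{N}\right)=N\phi^2 a,$$
while Proposition~2 gives
$$C_n^v(\underline{y},\underline{z},\underline{q})=\phi^2 a+2a\phi^2\left(\frac{N-1}{3N-1}\right)^2\frac{N+1}{N}.$$
Since $a\phi^2>0$, dividing through by it reduces the desired inequality $C_n^v<C_n$ to
$$1+2\left(\frac{N-1}{3N-1}\right)^2\frac{N+1}{N}<N.$$

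Next I would clear denominators by multiplying by $N(3N-1)^2>0$ and use $N^2-1=(N-1)(N+1)$; the inequality becomes $2(N-1)^2(N+1)<N(N-1)(3N-1)^2$. For $N\ge 2$ one divides by $N-1>0$ to obtain the equivalent statement $2(N^2-1)<N(3N-1)^2$, i.e.
$$0<N(3N-1)^2-2(N^2-1)=9N^3-8N^2+N+2=N^2(9N-8)+N+2,$$
which is strictly positive for every integer $N\ge 1$. Hence the strict inequality holds for all $N\ge 2$. The case $N=1$ is degenerate: there is a single CP, the equilibrium price $\underline{p}=(N-1)\phi a$ vanishes and all demand is forced onto that CP, so this case is excluded (and there the two costs coincide).

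I do not expect any genuine obstacle. The only points requiring care are (i) imposing $M=N$ in the no-agreement formula so that both scenarios refer to the same network, and (ii) the trivial boundary case $N=1$ where strictness fails; everything else is routine algebra and the resulting cubic is manifestly positive.
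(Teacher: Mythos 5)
Your proposal is correct and follows essentially the same route as the paper: substitute the closed-form equilibrium costs (with $M=N$), divide by $a\phi^2$, clear denominators, and reduce the claim to the positivity of a third-degree polynomial in $N$, which is exactly the "third degree polynom" the paper alludes to without writing it out. Your explicit factorization $9N^3-8N^2+N+2=N^2(9N-8)+N+2$ and the observation that strictness degenerates at $N=1$ are correct refinements of the paper's terser argument.
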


\begin{proof}
We have the following cost functions to compare: the cost $C_n(\underline{x},\underline{p})$ in the architecture without agreements between providers and $C_n^v(\underline{y},\underline{z},\underline{q})$ the cost for the same subscriber in the architecture with agreements. We have obtained the following expressions depending on the parameters of our system:
$$
C_n(\underline{x},\underline{p})=\phi^2aN
$$
and
$$
C_n^v(\underline{y},\underline{z},\underline{q})= \phi^2 a +2a\phi^{2}(\frac{N-1}{3N-1})^2 (\frac{N+1}{N}).
$$
After some linear algebras (the difference between the costs are equivalent to a third degree polynom), we obtained that for all $N$, $a$ and $\phi$, $C_n^v(\underline{y},\underline{z},\underline{q})<C_n(\underline{x},\underline{p})$.\\
\end{proof}

We first observe that the agreement is beneficial for the users. Nonetheless, agreement decisions are taken by the providers such that they take some benefits by doing this. In a second step, we compare the reward of a content provider depending on if there is agreement or not in the system. We get the following result:\\

\begin{lem} At equilibrium, the agreement between service and content providers is not beneficial for the content provider, i.e. for any content provider $m$ we have:
$$
\Pi_m^v(\underline{q},\underline{y},\underline{z})<\Pi_m(\underline{p},\underline{x}).
$$\\
\end{lem}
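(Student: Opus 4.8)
The plan is to reduce the claimed inequality to an elementary polynomial inequality in $N$ and then to check its sign. Recall that in the agreement scenario we have $M=N$, so specializing the equilibrium revenue formula of Section~\ref{equi1} we get
$$
\Pi_m(\underline{p},\underline{x})=\frac{(N-1)N}{M}\phi^{2}a=(N-1)\phi^{2}a ,
$$
while Proposition~2 and the revenue expression derived right after it give
$$
\Pi_m^v(\underline{q},\underline{y},\underline{z})=2a\phi^{2}\Bigl(\frac{N-1}{3N-1}\Bigr)^{2}\frac{N+1}{N}.
$$
Both quantities are nonnegative and share the positive factor $a\phi^{2}$, so $\Pi_m^v<\Pi_m$ is equivalent to
$$
2\Bigl(\frac{N-1}{3N-1}\Bigr)^{2}\frac{N+1}{N}<N-1 .
$$

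Second, I would clear denominators: multiplying both sides by $N(3N-1)^{2}>0$ turns the inequality into $2(N-1)^{2}(N+1)<(N-1)N(3N-1)^{2}$. For $N=1$ the agreement model is degenerate — a single CP with which the single ISP has an agreement — and both revenues vanish, so nothing is claimed; hence we may assume $N\ge 2$ and divide through by $N-1>0$, obtaining the equivalent statement $2(N-1)(N+1)<N(3N-1)^{2}$.

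Finally, expanding both sides gives $2N^{2}-2<9N^{3}-6N^{2}+N$, i.e. $9N^{3}-8N^{2}+N+2>0$. Writing the left-hand side as $N^{2}(9N-8)+N+2$ makes the conclusion immediate: for every integer $N\ge 1$ we have $9N-8\ge 1>0$ and $N+2>0$, so the cubic is strictly positive, and therefore $\Pi_m^v(\underline{q},\underline{y},\underline{z})<\Pi_m(\underline{p},\underline{x})$ for all admissible values of $N$, $a$, and $\phi$. The argument is entirely routine; the only point needing a word of care is the treatment of the common factor $N-1$ (its sign and the vacuous case $N=1$), after which the residual cubic inequality is transparent, so I do not expect any real obstacle here.
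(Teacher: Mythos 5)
Your proposal is correct, and the cubic reduction is carried out accurately: clearing the positive factor $a\phi^{2}$ and the denominator $N(3N-1)^{2}$, cancelling $N-1$ for $N\ge 2$, and bounding $9N^{3}-8N^{2}+N+2=N^{2}(9N-8)+N+2>0$ is a complete and airtight verification. However, it is not the route the paper takes. The paper's proof explicitly leans on the previous lemma: since with $M=N$ one has $C_n(\underline{x},\underline{p})=a\phi^{2}N=a\phi^{2}+\Pi_m(\underline{p},\underline{x})$ and $C^v_n(\underline{y},\underline{z},\underline{q})=a\phi^{2}+\Pi^v_m(\underline{q},\underline{y},\underline{z})$, the two revenues are exactly the two equilibrium costs shifted down by the same constant $a\phi^{2}$, so the revenue inequality is an immediate corollary of the cost inequality $C^v_n<C_n$ already established. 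That deduction is shorter and exposes a structural coincidence (user cost and CP revenue differ by the congestion term $a\phi^{2}$ in both regimes), whereas your direct polynomial check is self-contained and does not inherit whatever imprecision lives in the earlier "third degree polynom" argument. Your treatment of $N=1$ is also a genuine improvement: there both revenues vanish and the strict inequality fails, so the paper's blanket claim "for all $N$" needs exactly the caveat you supply ($N\ge 2$, the only non-degenerate case for the agreement model). No gap to report.
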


\begin{proof}
We have the following expressions of the provider revenue, first without agreements:
$$
\Pi_m(\underline{p},\underline{x})=(N-1)\phi^2a,
$$
and second, with the agreements:
$$
\Pi_m^v(\underline{q},\underline{y},\underline{z})=2a\phi^{2}(\frac{N-1}{3N-1})^2 (\frac{N+1}{N}). 
$$
After some linear algebras and using the result of the previous lemma, we obtained that for all $N$, $a$ and $\phi$, $\Pi_m^v(\underline{q},\underline{y},\underline{z})<\Pi_m(\underline{p},\underline{x})$.\\
\end{proof}

This lemma implies several remarks. First, the agreements should have a positive impact for the service provider. Usually the subscriber pays the service provider for the access, then the agreement implies minimum of traffic from each CP and thus a service provider has an economic interest to make an agreement with a CP. Second, we can change our model by assuming that each CP charges $c$ instead of 0 for the traffic to the privileged ISP and $\underline{q}+c$ to the other  ISP. This change does not bring any modification in the equilibrium of our hierarchical game and then we can determine the value $c$ such that the revenue of each CP is the same with or without an agreement. Therefore, both the subscribers and the providers will gain by introducing such agreements in the market.

\section{Conclusions and perspectives}
\label{conc}

In this paper, we have studied the impact of a pricing agreement between service and content providers on the Internet users. We have evaluated this impact by modeling the system with a hierarchical game in which Internet users, called subscribers, split their demand (download traffic) from several content providers depending on costs. Those costs depend on the preference (which depends on the congestion at each CP) and an access price paid to the content provider. At the upper-layer of the hierarchical game, the content providers compete through their prices in order to maximize their own revenue.  Our first main result is that introducing agreements between service and content providers, causes an impact on the cost perceived by the Internet users. Therefore, it also brings an impact to the revenue of the content providers. In fact, we have proved that such agreements have a positive effect for the Internet users and a negative effect on the content providers. Based on these interesting results, we can think about several extensions. First, we can also introduce quality of service controlled by the content provider and/or the service provider. Second we can do again all our computation in the case of an elastic demand.

\end{document}